\newcommand{\ZZ}{\mathbb{Z}}
\newcommand{\AC}{\mathcal{A}}
\newcommand{\CC}{\mathcal{C}}
\newcommand{\FC}{\mathcal{F}}
\newcommand{\GC}{\mathcal{G}}
\newcommand{\HC}{\mathcal{H}}
\newcommand{\IC}{\mathcal{I}}
\newcommand{\JC}{\mathcal{J}}
\newcommand{\LC}{\mathcal{L}}
\newcommand{\NC}{\mathcal{N}}
\newcommand{\PC}{\mathcal{P}}
\newcommand{\QC}{\mathcal{Q}}
\newcommand{\SC}{\mathcal{S}}
\newcommand{\TC}{\mathcal{T}}
\newcommand{\ket}[1]{|#1\rangle}                  
\newcommand{\bra}[1]{\left\langle #1 \right|}     
\newcommand{\dyad}[2]{\ket{#1}\bra{#2}}           
\newcommand{\matl}[3]{\langle #1|#2|#3\rangle}    
\newcommand{\Tr}{{\rm Tr}}                        
\newcommand{\vect}[1]{\vec{#1}}         
\newcommand{\ii}{\mathrm{i}}					  
\def\dya#1{|#1\rangle \langle#1|}
\long\def\ca#1\cb{} 
\newtheorem{theorem}{Theorem}
\newtheorem{lemma}{Lemma}
\newcommand{\pt}{S}
\newcommand{\ptc}{\bar S}
\begin{document}

\title{Generalized Semi-Quantum Secret Sharing Schemes}

\author{Vlad Gheorghiu}
\email{vgheorgh@ucalgary.ca}
\affiliation{Institute for Quantum Information Science and}
\affiliation{Department of Mathematics and Statistics,\\University of Calgary, 2500 University Drive NW,\\Calgary, AB, T2N 1N4, Canada}


\begin{abstract}
We investigate quantum secret sharing schemes constructed from $[[n,k,\delta]]_D$ non-binary stabilizer quantum error correcting codes with carrier qudits of prime dimension $D$. We provide a systematic way of determining the access structure, which completely determines the forbidden and intermediate structures.
We then show that the information available to the intermediate structure can be fully described and quantified by what we call the \emph{information group}, a subgroup of the Pauli group of $k$ qudits, and employ this group structure to construct a method for hiding the information from the intermediate structure via twirling of the information group and sharing of classical bits between the dealer and the players. Our scheme allows the transformation of a ramp (intermediate) quantum secret sharing scheme into a semi-quantum perfect secret sharing scheme with the same access structure as the ramp one but without any intermediate subsets, and is optimal in the amount of classical bits the dealer has to distribute. 
\end{abstract}

\pacs{03.67.Dd, 03.67.Pp, 03.67.Mn}
\maketitle


\section{Introduction\label{sct1}} 
Classical secret sharing, introduced first by \cite{Blakely} and \cite{Shamir:1979}, is an important multipartite cryptographic protocol in which a dealer distributes a secret to a set of participants (players) in such a way that only certain subsets of players that form the \emph{access structure}  can collaboratively recover it. Quantum secret sharing \cite{PhysRevA.59.1829,PhysRevA.61.042311,PhysRevLett.83.648} is the natural extension of the classical protocol to the quantum domain: the secret is now a quantum state, the players comprise of quantum systems and quantum communication is allowed between the dealer and the players. The $(q,n)$ \emph{threshold} quantum secret sharing scheme is one of the most common protocols, in which the access structure comprises all subsets of $q$ or more out of $n$ players, and the forbidden structure consists of all subsets of less than $q$ players. Recently the threshold quantum schemes have been extended to \emph{intermediate}, or \emph{ramp} schemes \cite{PhysRevA.72.032318}, in which there are subsets of players that may recover partial information about the secret and which collectively form the \emph{intermediate structure}. Ramp schemes trade security for efficiency: they allow the sharing of quantum secrets of a dimension larger than the dimension of the players' shares, which, as we explain later, is impossible in threshold schemes.

An important desideratum in the theory of quantum (as well as classical) secret sharing is the construction of good protocols, and a vast amount of work is dedicated to this subject \cite{PhysRevA.71.044301,PhysRevA.78.042309,  PhysRevA.81.052333, PhysRevA.82.062315, quantph.1108.5541}. A promising approach is the using quantum error correcting codes for the construction of quantum secret sharing protocols: recovering a quantum secret is equivalent to the ability to detect and correct errors. One of the simplest examples of such duality between error correcting codes and secret sharing is the $[[5,1,3]]_2$ qubit code that induces a $(3,5)$ threshold quantum secret sharing scheme \cite{PhysRevLett.83.648}. 

Given some quantum error correcting code, a fundamental problem is to determine the induced access, forbidden and intermediate structures, and to quantify the information available in intermediate subsets. The vast majority of literature approaches this problem from a state point of view: consider an arbitrary quantum state $\ket{\psi}$ on the input state, then investigate the reduced density matrix of the encoded state down to some subset $S$ of the carriers. If the reduced density matrix is independent of the input state, then $S$ belongs to the forbidden structure, whereas if the density matrix is isometrically equivalent to $\dya{\psi}$ then $S$ belongs to the access structure. The intermediate structure consists of subsets that do not satisfy the previous two conditions.

Although the above approach works, it is in general tedious. Since the input state (secret) consists of $k$ qudits, the number of coefficients used to describe it scale exponentially with $k$, and even numerical methods become inefficient for quantum codes with large $k$. In our present work, we employ a completely different approach based on a channel point of view, and regard the error correcting code as an isometric encoding of $k$ qudits into $n$ carriers, as shown schematically in Fig.~\ref{fgr1}.
\begin{figure}
	\includegraphics[scale=0.6]{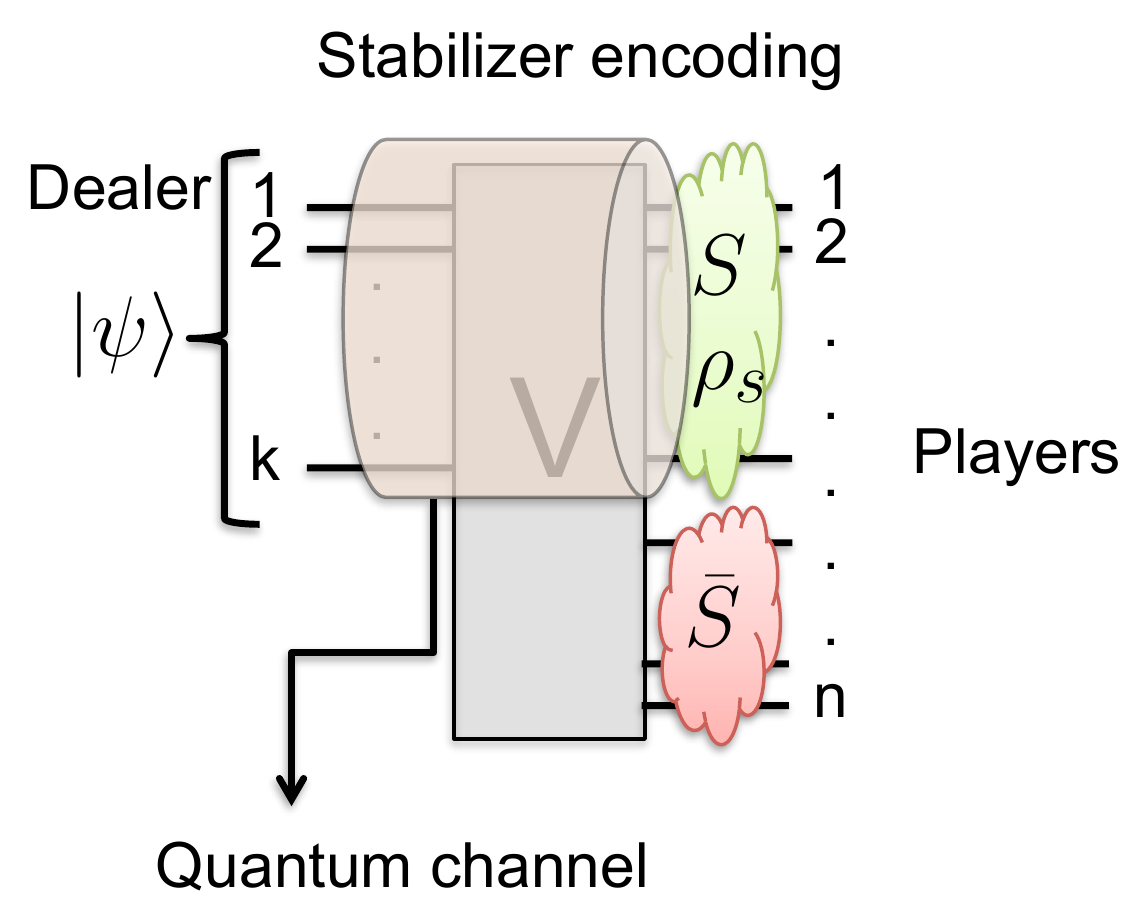}
	\caption{Quantum secret sharing scheme induced by a stabilizer code, where $V$ denotes the  encoding isometry.}
	\label{fgr1}
\end{figure}
If the channel from the input of the isometry to some subset $S$ of carriers is perfect (up to a unitary or isometry) then $S$ belongs to the access structure, and if it is totally noisy then $S$ belongs to the forbidden structure, with the intermediate case in between. For the class of qudit stabilizer codes, which include the vast majority of known error correcting codes, we showed \cite{PhysRevA.81.032326} that any such channel can be fully characterized by what we called an information group, a subgroup of the Pauli group of $k$ qudits. The symplectic structure of the information group fully characterizes the capacity of the channel and it can be shown that the latter can perfectly transmit an integer number of $r$ qudits plus an additional integer number $s$ of classical ``dits", with $r+s\leqslant k$. This allows us to precisely quantify the amount of accessible information by an intermediate set. Furthermore, determining whether the channel is perfect or not is a polynomial time (in $k$) decision problem, and this allows us to determine if $S$ belongs the access structure efficiently. Having the access structure determined, we show that the forbidden and intermediate structures are fully determined by the former.

We next show how to improve the security of generalized secret sharing schemes and transform them to perfect semi-quantum schemes, i.e. how to effectively ``remove" the intermediate structure while keeping the access structure the same. Our method is based on twirling what we call the \emph{intermediate information group}, a subgroup of the Pauli group of $k$ qudits associated with the intermediate structure.  The symplectic structure of this group provides a systematic method of erasing the intermediate structure by allowing the dealer to send classical information to the players using an appropriate classical secret sharing scheme. We show that our scheme is optimal in the amount of classical bits the dealer has to distribute. 

The remainder of this article is organized as follows. In Sec.~\ref{sct2} we define the generalized Pauli group and qudit stabilizer codes. We then show in Sec.~\ref{sct3} how any arbitrary stabilizer quantum error correcting code induces a generalized secret sharing scheme, and prove that the access structure of the latter fully determines the forbidden and intermediate structures. In Sec.~\ref{sct4} we introduce the subset information group, show that it fully characterizes the amount of information accessible by a subset, then present an algorithm for determining the access structure. The method of transforming an arbitrary generalized secret sharing scheme into a perfect scheme by allowing the sharing of extra classical bits between the dealer and the players is the subject of Sec.~\ref{sct5}. Finally, we present simple illustrative examples in Sec.~\ref{sct6} and conclusions and open questions in Sec.~\ref{sct7}.

\section{Preliminary remarks and definitions\label{sct2}}

\subsection{Generalized Pauli operators and graph codes}
We generalize Pauli operators to higher dimensional systems of prime dimension $D$ following \cite{patera:665,PhysRevA.64.012310,PhysRevA.65.052316}. 
The $X$ and $Z$ operators acting on a
single qudit are defined as
\begin{equation}
\label{eqn1}
Z=\sum_{j=0}^{D-1}\omega^j\dyad{j}{j},\quad X=\sum_{j=0}^{D-1}\dyad{j}{j+1},
\end{equation}
and satisfy
\begin{equation}
\label{eqn2}
X^D=Z^D=I,\quad XZ=\omega ZX,\quad \omega = \mathrm{e}^{2 \pi \ii /D},
\end{equation}
where \emph{the addition of integers is modulo $D$}, as will be 
assumed from now on. For a collection of $n$ qudits\footnote{or $k$, depending on the context; for the latter case one should replace $n$ by $k$ in all definitions of this subsection.}
we use subscripts to
identify the corresponding Pauli operators: thus $Z_i$ and $X_i$ operate on
the space of qudit $i$. The Hilbert space of a single qudit is denoted by $\HC$, and the Hilbert space of $n$ qudits by $\HC_n$, respectively. Operators of the form
\begin{equation}
\label{eqn3}
\omega^{\lambda}X^{\vect{x}}Z^{\vect{z}} :=
\omega^{\lambda}X_1^{x_1}Z_1^{z_1}\otimes X_2^{x_2}Z_2^{z_2}\otimes\cdots
\otimes X_n^{x_n}Z_n^{z_n}
\end{equation} 
will be referred to as \emph{Pauli products}, where $\lambda$ is an integer
in $\ZZ_D$ and $\vect{x}$ and $\vect{z}$ are $n$-tuples in $\ZZ_D^n$, the
additive group of $n$-tuple integers mod $D$. For a fixed $n$ the collection
of all possible Pauli products \eqref{eqn3} form a group under operator
multiplication, the \emph{Pauli group} $\PC_n$. If $p$ is a Pauli product,
then $p^D=I$ is the identity operator on $\HC_n$, and hence the order of any
element of $\PC_n$ is $D$. While
$\PC_n$ is not Abelian, it has the property that two elements \emph{commute up
  to a phase}
\begin{equation}\label{eqn4}
p_1p_2 = \omega^{\lambda_{12}} p_2p_1,
\end{equation}
with $\lambda_{12}$ an
integer in $\ZZ_D$ that depends on $p_1$ and $p_2$.


The collection of Pauli products with $\lambda=0$, i.e. a pre-factor of $1$, is
denoted by $\QC_n$. The elements $X^{\vect{x}}Z^{\vect{z}}$ of $\QC_n$ form an orthonormal basis of
$\LC(\HC_n)$, the Hilbert space of linear operators on
$\HC_n$, with respect to the inner product
\begin{align}
\label{eqn5}
\frac{1}{D^n}&\Tr[({X^{\vect{x}_1}Z^{\vect{z}_1}})^\dagger X^{\vect{x}_2}Z^{\vect{z}_2}]=\delta_{\vect{x}_1,\vect{x}_2}\delta_{\vect{z}_1,\vect{z}_2},\notag\\
&\forall X^{\vect{x}_1}Z^{\vect{z}_1}, X^{\vect{x}_2}Z^{\vect{z}_2}\in \QC_n.
\end{align}
Note that $\QC_n$ is a \emph{projective group} or group up
to phases. There is a bijective map between $\QC_n$ and the quotient group
$\PC_n/\{\omega^{\lambda}{I}\}$ for $\lambda\in\ZZ_D$ where
$\{\omega^{\lambda}{I}\}$, the center of $\PC_n$, consists of phases
multiplying the identity operator on $n$ qudits. The projective group $\QC_n$ is also isomorphic to the additive group $\ZZ_D^{2n}$ of $2n$-tuple integers under addition mod $D$.

\subsection{Qudit stabilizer codes}
Relative to the Pauli group $\PC_n$ of $n$ \emph{carrier} qudits we define a \emph{stabilizer} code $\HC_C$
to be a $K\geq 1$-dimensional subspace of the carriers' Hilbert space $\HC_n$, $\HC_C\subset\HC_n$, 
satisfying three conditions:

\begin{description}

\item[C1] There is a subgroup $\SC$ of $\PC_n$ such that
for \emph{every} $s$ in $\SC$ and \emph{every} $\ket{\psi}$ in
$\CC$
\begin{equation}
\label{eqn6}
 s \ket{\psi} = \ket{\psi}
\end{equation}

\item[C2] The subgroup $\SC$ is maximal in the sense that every
$s$ in $\PC_n$
for which \eqref{eqn6} is satisfied for all $\ket{\psi}\in\HC_C$
belongs to
  $\SC$.

\item[C3] The coding space $\HC_C$ is maximal in the sense that
any ket
$\ket{\psi}$ that satisfies \eqref{eqn6} for every $s\in\SC$
lies in
  $\HC_C$.
\end{description}

If these conditions are fulfilled we call $\SC$ the
\emph{stabilizer} of the
code $\HC_C$. That it is Abelian follows from the commutation
relation \eqref{eqn4}, since for $K>0$
there is some nonzero $\ket{\psi}$ satisfying \eqref{eqn6}.  

Note that one
can always find a subgroup $\SC$ of $\PC_n$ satisfying C1 and C2
for any
subspace $\HC_C$ of the Hilbert space, but it might consist of
nothing but the
identity. Thus it is condition C3 that distinguishes stabilizer
codes from
nonadditive codes. A stabilizer code is uniquely determined by
$\SC$ as
well as by $\HC_C$, since $\SC$ determines $\HC_C$ through C3, so in
a sense the code and its stabilizer are dual
to each other.

A qudit stabilizer code is usually denoted by $[[n,k,\delta]]_D$ where $n$ represents the number of carrier qudits (each of dimension $D$, assumed in this article a prime number) and $k$ specifies the number of \emph{input} (or encoded) qudits, assumed to have the same dimensionality $D$ as the carrier qudits\footnote{See \cite{quantph.1101.1519} for a general treatment of stabilizer codes with qudits of composite dimensionality, where some differences arise. For example, the input qudits do not have to have the same dimensionality $D$ as the carriers, but can be of any dimension $d$ that divides $D$.}. Here $\delta$ is the \emph{distance} of the code \cite{NielsenChuang:QuantumComputation}, a parameter that essentially specifies how ``good" the code is: best codes have as large as possible distance with as few as possible carriers. The stabilizer code can then be seen as arising from the isometric encoding of the input space $\HC_k$ into the $K=D^k$ dimensional subspace $\HC_C$ of $\HC_n$ by the isometry
\begin{align}\label{eqn7}
V:\HC_k\longrightarrow \HC_C\subset\HC_n,\quad
V=\sum_{j=0}^{K-1}\dyad{c_j}{j}.
\end{align}
Here $\{\ket{j}\}$ is an orthonormal basis of the input space $\HC_k$ and the coding space is specified by $\HC_C=\mathrm{Span}\{\ket{c_j}\}$, where the \emph{codewords} $\ket{c_j}$'s are all orthogonal. Note that $V$ cannot be any isometric encoding of $\HC_k$ into $\HC_C$, but one compatible with the stabilizer requirements C1--C3 above.

\section{Generalized secret sharing schemes}\label{sct3}
\subsection{Access, forbidden and intermediate structures}
We now show that any quantum error correcting code can be turned into a generalized secret sharing scheme as follows. 
Let $\ket{\psi}\in\HC_k$ be an arbitrary $k$ qudit quantum state, (secret, unknown by the $n$ output qudits), which is then ``distributed" to the $n$ carrier qudits (players) via the corresponding stabilizer encoding $V$, so the $n$ players end up sharing the encoded state $V\ket{\psi}\in\HC_C\subset\HC_n$. Let $N=\{1,2,\ldots, n\}$ denote the set of output qudits, and let $\NC$ be the collection of all subsets of $N$, i.e. the power set of $N$. We define the following structures:
\begin{itemize}
\item $\AC$ -- the \emph{access structure} (or the \emph{authorized structure}): $\AC\subset \NC$ such that any subset of qudits $S\in\AC$ can fully recover $\ket{\psi}$, i.e. the quantum channel from the input of the isometry $V$ to any subset in $S\in\AC$ is perfect.
\item $\FC$ -- the \emph{forbidden structure} (or the \emph{un-authorized structure}): $\FC\subset \NC$ such that no subset of qudits $S\in\FC$ can recover anything about $\ket{\psi}$, i.e. the quantum channel from the input of the isometry $V$ to any subset in $S\in\FC$ is completely noisy.
\item $\IC$ -- the \emph{intermediate structure} (or the \emph{ramp structure}): $\IC\subset \NC$ such that any subset of qudits $S\in\IC$ can recover some partial information about $\ket{\psi}$, i.e. the quantum channel from the input of the isometry $V$ to any subset in $S\in\AC$ is noisy (not perfect nor completely noisy).
\end{itemize}
In conclusion, the isometry $V$ completely determines the triplet $(\AC,\FC,\IC)$, and we call the latter a \emph{generalized secret sharing scheme}. Whenever $\IC=\{\emptyset\}$ we call the scheme \emph{perfect}. Our definition generalizes the two most common secret sharing schemes in the literature: 
\begin{enumerate}
\item the \emph{threshold $(q,n)$ secret sharing scheme}, in which any subset of $q$ or more players can fully recover the quantum secret (are authorized), whereas any subset of less than $q$ players cannot recover any information whatsoever about the secret (are forbidden). Formally,
\begin{align}\label{eqn8}
\AC&=\left\{ S\in \NC : |S|\geqslant q\right\},\notag\\
\FC&=\left\{ S\in \NC : |S|< q\right\},\notag\\
\IC&=\{\emptyset\},
\end{align}
where $|S|$ denotes the size of the set $S$, i.e. the number of players in $S$. The threshold schemes are a strict subset of the perfect schemes, since the latter allow for access structures with subsets of different sizes;
\item the \emph{ramp $(q,L,n)$ secret sharing scheme}, in which \emph{any} subset of $q$ or more players is authorized, \emph{any} subset of $q-L$ or fewer is forbidden, and those with $q-u (0<u<L)$ are not all authorized nor all forbidden. In this notation a threshold secret sharing scheme has \mbox{$L=1$}.
In our notation, a ramp $(q,L,n)$ scheme must have
\begin{align}\label{eqn9}
\AC&\supseteq\left\{ S\in \NC : |S|\geqslant q\right\},\notag\\
\FC&\supseteq\left\{ S\in \NC : |S|\leqslant q-L\right\},\notag\\
\IC&\subseteq\{ S\in \NC : q-L<|S|<q \}.
\end{align}
\end{enumerate} 
Note that any $[[n,k,\delta]]_D$ stabilizer code can be turned into a $(q,L,n)$ ramp secret sharing scheme, with $q=n-\delta+1$ and $L=n-2\delta+2$, since any subset of more than \mbox{$n-\delta$} players  has full information about the secret and can fully recover the secret by a suitable decoding procedure, hence it is an authorized set and belongs to the access structure, whereas any subset of less than $\delta$ players has no information whatsoever about the secret, hence it is a forbidden set and belongs to the adversary structure; see Sec.~III.A of \cite{PhysRevA.56.33} for a simple no-cloning based argument. One therefore has a ramp $(n-\delta+1,n-2\delta+2,n)$ quantum secret sharing scheme.  

We now illustrate the concepts of this section by simple examples. 
First, consider the $[[5,1,3]]_2$ code \cite{PhysRevLett.77.198,PhysRevA.54.3824}. It can be shown that all subsets of size 3 or more can fully recover whatever information was encoded, whereas any subset of size 2 or less cannot recover anything. Therefore this code can be turned into a $(q=3,n=5)$ threshold secret sharing scheme.

Next, consider the $[[7,1,3]]_2$ additive graph code of \cite{PhysRevA.81.032326}, which is locally unitarily equivalent to the Steane code \cite{PhysRevLett.77.793}. It then follows that all subsets of size 5 or more can recover what was encoded, whereas any subset of size 2 or less cannot recover anything. However, there is more to say about this code, and one can show that the subsets of size 3 or 4 can either fully recover the secret or cannot recover anything, hence $\IC=\{\emptyset\}$, so the scheme is perfect (although not threshold). See the discussion on pg. 10 of \cite{PhysRevA.81.032326} for a detailed discussion and for a full list of subsets comprising $\AC$ and $\FC$.

Finally, consider the $[[4,2,2]]_2$ code \cite{PhysRevA.54.R1745} that can correct one erasure error, i.e. can fully correct one qubit error provided one knows what the corrupted qubit is. It can be shown \cite{PhysRevA.81.032326} that all subsets of size 3 or 4 can recover all encoded information, whereas any subset of size 1 cannot recover anything. All subsets of size 2 can only recover partial information (are not able to fully reconstruct what was encoded). We can therefore turn this code into a $(q=3,L=2,n=4)$ ramp secret sharing scheme. Using our formalism, $\AC=\{S\in \NC : |S|\geqslant 3\}$, $\FC=\{S\in \NC : |S| \leqslant 1\}$ and $\IC=\{S\in \NC : |S|=2\}$.

\subsection{Relations between $\AC$, $\FC$ and $\IC$}
The following question arrises naturally: given an arbitrary stabilizer code, how can one determine the induced triplet $(\AC,\FC,\IC)$? This question is of crucial importance in the theory of quantum secret sharing, and, before providing a full answer to this question, we first explain why the access structure $\AC$ completely determines the forbidden structure $\FC$ (and viceversa), and together they determine the intermediate structure $\IC$, so it is enough to know only $\AC$ (or $\FC$) to determine the full triplet $(\AC,\FC,\IC)$. The fact that $\AC$ is dual to $\FC$ was already known \cite{PhysRevLett.83.648,PhysRevA.61.042311}, but we restate the result for the sake of completeness. 
\begin{theorem}\label{thm1}
Let $(\AC,\FC,\IC)$ be a generalized quantum secret sharing scheme induced by some quantum error correcting code. Then $\AC$ and $\FC$ are dual to each other, and completely determine $\IC$, in the sense
\begin{align}
\AC&=\left\{ S\in \NC : \bar S\in\FC\right\}\label{eqn10},\\
\FC&=\left\{ S\in \NC : \bar S\in\AC\right\}\label{eqn11}, \\
\IC&=\NC\setminus\left\{\AC\bigcup\FC \right\},\label{eqn12}
\end{align}
where $\bar S=N\setminus S$ denotes the complement of $S$. 
\end{theorem}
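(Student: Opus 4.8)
The plan is to prove the duality \eqref{eqn10}--\eqref{eqn11} first, since \eqref{eqn12} then follows immediately from the definitions: by construction every subset $S\in\NC$ lies in exactly one of $\AC$, $\FC$, $\IC$ (the channel from the input of $V$ to $S$ is either perfect, completely noisy, or neither), so $\NC$ is the disjoint union $\AC\cup\FC\cup\IC$ and hence $\IC=\NC\setminus(\AC\cup\FC)$. Moreover, since \eqref{eqn10} and \eqref{eqn11} are equivalent statements (applying complementation to one yields the other, using $\bar{\bar S}=S$), it suffices to establish a single implication: $S\in\AC$ if and only if $\bar S\in\FC$.

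The core of the argument is a no-cloning / monogamy observation. The encoded state $V\ket{\psi}$ is a pure state on the full set $N$ of carriers. First I would show that $S\in\AC\Rightarrow\bar S\in\FC$. Suppose $S$ can fully recover the secret; then the channel to $S$ is perfect, meaning $\ket{\psi}$ can be reconstructed from $S$ alone via some decoding isometry. If the complement $\bar S$ could recover even partial information about $\ket{\psi}$, then both $S$ and $\bar S$ would carry information about an arbitrary unknown state $\ket{\psi}$, allowing an approximate cloning of $\ket{\psi}$, which is forbidden. More precisely, because $V\ket{\psi}$ is pure on $N$, a perfect channel to $S$ forces the reduced state on $\bar S$ to be independent of $\ket{\psi}$ (the purification structure means all correlation with the input resides in $S$), so $\bar S$ is forbidden. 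I would make this rigorous by noting that if $S$ decodes $\ket{\psi}$ perfectly, then the reduced density matrix $\rho_{\bar S}$ cannot depend on $\ket{\psi}$, for otherwise two orthogonal inputs would yield distinguishable states on both $S$ and $\bar S$ simultaneously.

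For the converse, $\bar S\in\FC\Rightarrow S\in\AC$, I would argue that if $\bar S$ has no information about $\ket{\psi}$, then $\rho_{\bar S}$ is a fixed state independent of the input. Consider a purification of the input by a reference system $R$; the encoded global state on $R\cup N$ is pure, and $\rho_{\bar S}$ being input-independent means $\bar S$ is decoupled from $R$. Since the global state is pure, all correlations between $R$ and $N$ must then be carried by $S$, which implies the channel from the input to $S$ is perfect and $S$ can recover $\ket{\psi}$. This decoupling-implies-recoverability direction is the standard quantum error correction condition phrased through the complementary channel.

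The main obstacle I expect is making the ``partial information'' case airtight rather than merely invoking no-cloning informally: I must rule out the possibility that $S$ decodes perfectly while $\bar S$ still learns something, and symmetrically that $\bar S$ is fully forbidden yet $S$ fails to decode. Both hinge on the purity of $V\ket{\psi}$ on $N$ together with the complementarity of the channel to $S$ and the channel to $\bar S$, so the cleanest route is to phrase everything in terms of these two complementary channels and use that, for a pure global state, the channel to $S$ is perfect if and only if the complementary channel to $\bar S$ is completely noisy. Since this is a known result \cite{PhysRevLett.83.648,PhysRevA.61.042311}, I would keep the proof brief and rely on the no-cloning argument of Sec.~III.A of \cite{PhysRevA.56.33} cited earlier, presenting it mainly for completeness.
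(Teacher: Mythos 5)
Your proposal is correct and takes essentially the same approach as the paper: one direction via no-cloning (what the paper invokes as the ``No Splitting'' theorem of Griffiths), the converse via conservation of information in a pure encoded state (the paper's ``Somewhere'' theorem, which your reference-system decoupling argument reproduces), with \eqref{eqn12} following by construction. The only differences are cosmetic---you reduce \eqref{eqn10}--\eqref{eqn11} to a single biconditional using $\bar{\bar S}=S$ and phrase the conservation step explicitly in terms of complementary channels rather than citing the named theorems.
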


\begin{proof}
The argument is based on two facts: i) a perfect quantum channel cannot ``leak" information, since otherwise the no-cloning theorem is violated; and ii), an isometry ``conserves" information: if it is absent from some part of its output it has to be present in the complement. 

Consider first a subset $S\in\AC$. Then the players in $S$ can recover full information about what was encoded, and, by the ``No Splitting" Theorem of \cite{PhysRevA.76.062320}, this implies that the complement $\bar S$ cannot contain any information whatsoever about the input, hence must belong to the forbidden structure $\FC$. Intuitively, if $S$ belong to the access structure, then the quantum channel from the input of the underlying isometry $V$ to $S$ (obtained by partially-tracing down $\bar S$) must be perfect, so the complementary channel to $\bar S$ must be completely noisy (otherwise the no-cloning theorem will be violated), which is the same as saying that $\bar S\in \FC$.

On the other hand, let's now consider a subset $S\in\FC$. Then, absolutely no information about what was encoded can be recovered from $S$, and, by the ``Somewhere Theorem" of \cite{PhysRevA.76.062320}, all information about the input must be present in $S$, hence $S\in\AC$. Intuitively, this is the same as saying that an isometry ``conserves" information: if it is absent from a subset it \emph{must} be present in its complement. This proves the duality \eqref{eqn10}--\eqref{eqn11} between $\AC$ and $\FC$.

Finally, \eqref{eqn12} follows at once by construction.
\end{proof}

We therefore conclude this section by restating that it is enough to determine $\AC$ (or $\FC$) in order to fully determine $(\AC,\FC,\IC)$. In the next section we provide a systematic way of determining $\AC$.

\section{Determining $(\AC,\FC,\IC)$}\label{sct4}
We now review some essential results about information location in subsets of players of a quantum secret sharing scheme induced by an $[[n,k,\delta]]_D$ stabilizer code. The interested reader can consult our previous work \cite{PhysRevA.81.032326} for detailed proofs of the claims of this section\footnote{All results in  were proven for additive graph codes (a subset of stabilizer codes, see e.g. \cite{PhysRevA.78.042303} for a comprehensive introduction), but we noted that all our results are automatically valid for prime dimensional stabilizer codes, since the latter are locally unitary equivalent to the former, as proved by Schlingemann in \cite{quantph.0111080}.}.

\subsection{The subset information group and the access structure}
Let us consider a subset $\pt\in \NC$ of players, and let $\ptc$ denote its complement. The relevant question for quantum secret sharing is how much information can $\pt$ recover about a previously encoded secret $\ket{\psi}\in\HC_k$? Whatever information $\pt$ can extract about the secret is fully determined by the reduced density matrix
\begin{equation}\label{eqn13}
\rho_\pt:=\Tr_{\ptc}[V\dya{\psi}V^\dagger].
\end{equation}
Since the collection $\QC_k$ of Pauli operators on the input space $\HC_k$ forms an operator basis of the dealer's operator Hilbert space $\LC(\HC_k)$, one can expand the secret as
\begin{equation}\label{eqn14}
\dya{\psi}=\frac{1}{D^k}\sum_{\vect{x},\vect{z}\in\ZZ_D^k}c(\vect{x},\vect{z})X^{\vect{x}}Z^{\vect{z}},
\end{equation}
where 
\begin{equation}\label{eqn15}
c(\vect{x},\vect{z})=\Tr\left[({X^{\vect{x}}Z^{\vect{z}}})^\dagger\dya{\psi}\right]=\matl{\psi}{({X^{\vect{x}}Z^{\vect{z}}})^\dagger}{\psi}
\end{equation}
are the Fourier coefficients of the expansion.

The state of $\pt$ is then
\begin{equation}\label{eqn16}
\rho_\pt=\sum_{\vect{x},\vect{z}\in\ZZ_D^k}c(\vect{x},\vect{z}) \Tr_{\ptc}\left[VX^{\vect{x}}Z^{\vect{z}}V^\dagger\right].
\end{equation}

We shown in \cite{PhysRevA.81.032326} that the collection of operators $X^{\vect{x}}Z^{\vect{z}}$ on the dealer's space $\HC_k$ for which \mbox{$\Tr_{\ptc}\left[VX^{\vect{x}}Z^{\vect{z}}V^\dagger\right]\neq0$} forms a group $\GC(\pt)$, called the \emph{subset information group}. We have also proved that the subset information group fully characterizes what kind of correlations are present between the dealer and the subset $\pt$ of the players, and provided an efficient linear algebra based algorithm for finding it. More specifically, if $X^{\vect{x}}Z^{\vect{z}}\in\GC(\pt)$, i.e.  $\Tr_{\ptc}\left[VX^{\vect{x}}Z^{\vect{z}}V^\dagger\right]\neq0$, then any two eigenvectors $\ket{\phi_1}$ and $\ket{\phi_2}$ of $X^{\vect{x}}Z^{\vect{z}}$ remain fully distinguishable on the subset $\pt$ after the encoding by $V$, i.e. have orthogonal support so their Hilbert-Schmidt inner product is zero
\begin{equation}\label{eqn17}
\Tr\left[ \left(\Tr_{\ptc}[V\dya{\phi_1}V^\dagger]\right)^\dagger \Tr_{\ptc}[V\dya{\phi_2}V^\dagger] \right]=0.
\end{equation}
In other words, if $X^{\vect{x}}Z^{\vect{z}}\in\GC(\pt)$, the correlations between the players in $\pt$ and the dealer are perfect in the eigenbasis of $X^{\vect{x}}Z^{\vect{z}}$, that is, if the dealer chooses the secret to be one of the the eigenvectors of $X^{\vect{x}}Z^{\vect{z}}$, say $\ket{\phi_j}$, then the players in $\pt$ can fully recover the $j$ by performing an appropriate positive operator-valued measure (POVM). We say that the $X^{\vect{x}}Z^{\vect{z}}$ type of information\cite{PhysRevA.76.062320} is \emph{perfectly present} in $\pt$. We also proved that the $C^{*}$-algebra generated by the elements of $\GC(\pt)$ is fully correctable\cite{PhysRevLett.98.100502,PhysRevA.76.042303,quantph.0907.4207} on $\pt$, that is, any 2 orthogonal states in the algebra remain orthogonal after encoding and tracing down to $\pt$.
 
The subset $\pt$ contains no information whatsoever about the secret $\dya{\psi}$ if and only if the subset information group is proportional to identity on $\HC_k$, $\GC(\pt)\propto I$, or, equivalently, all encoded operators $VX^{\vect{x}}Z^{\vect{z}}V^\dagger$ trace to zero down to $\pt$ with the exception of $VX^{\vect{0}}Z^{\vect{0}}V^\dagger=VV^\dagger$. That is, no matter what measurement strategy the players in $\pt$ adopt, they cannot recover any information about the secret $\dya{\psi}$, or, equivalently, the reduced density matrix $\rho_\pt$ is independent of $\dya{\psi}$,
\begin{equation}\label{eqn18}
\rho_\pt=\frac{1}{D^k}\Tr_{\ptc}\left[ VV^\dagger \right].
\end{equation}
The subset $\pt$ contains all information about the secret if and only if the subset information group is the whole Pauli group $\PC_k$, $\GC(\pt)=\PC_k$, hence we have the following Theorem.
\begin{theorem}\label{thm2}
Let $(\AC,\FC,\IC)$ be a generalized quantum secret sharing scheme induced by an $[[n,k,\delta]]_D$ quantum error correcting code. Then the access structure $\AC$ is given by
\begin{equation}\label{eqn19}
\AC=\left\{ S \in \NC : \GC(\pt)=\PC_k\right\}.
\end{equation}
\end{theorem}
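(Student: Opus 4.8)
The plan is to read Theorem~\ref{thm2} as the conjunction of two facts that are already in place by the time we reach it: the \emph{definition} of the access structure (a subset $\pt$ lies in $\AC$ precisely when the channel $\Phi(A):=\Tr_{\ptc}[VAV^\dagger]$ from the dealer's space to $\pt$ is perfect, i.e.\ reversible by a physical recovery map), together with the characterization of ``full information'' in terms of the subset information group quoted just above. Concretely, I would prove the two inclusions separately, establishing $\pt\in\AC\iff\GC(\pt)=\PC_k$, after which \eqref{eqn19} is immediate.

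For the direction $\GC(\pt)=\PC_k\Rightarrow\pt\in\AC$, I would use that the Pauli products $X^{\vect{x}}Z^{\vect{z}}$ form an orthonormal operator basis of $\LC(\HC_k)$ by \eqref{eqn5}, so the $C^{*}$-algebra they generate is all of $\LC(\HC_k)$. When $\GC(\pt)=\PC_k$ this coincides with the algebra generated by the information group, which is \emph{fully correctable} on $\pt$ by the result quoted above \cite{PhysRevLett.98.100502,PhysRevA.76.042303,quantph.0907.4207,PhysRevA.81.032326}. Correctability of the whole operator algebra means there is a recovery channel restoring every input state from $\rho_\pt$, so the channel to $\pt$ is perfect and $\pt\in\AC$.

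For the converse $\pt\in\AC\Rightarrow\GC(\pt)=\PC_k$, I would argue by contraposition through the kernel of $\Phi$. If $\pt\in\AC$ the channel is reversible, so $\mathcal{R}\circ\Phi=\mathrm{id}$ on $\LC(\HC_k)$ for some recovery $\mathcal{R}$; by linearity this forces $\Phi$ to be injective as a map of operator spaces. Now suppose some Pauli product $g=X^{\vect{x}}Z^{\vect{z}}$ failed to lie in $\GC(\pt)$, i.e.\ $\Phi(g)=\Tr_{\ptc}[VgV^\dagger]=0$. Since $g$ is a nonzero (unitary) operator, this would place a nonzero element in $\ker\Phi$, contradicting injectivity. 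Hence every Pauli product has nonvanishing image and belongs to $\GC(\pt)$, giving $\GC(\pt)=\PC_k$. Assembling the two directions yields \eqref{eqn19}.

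The step I expect to be the genuine obstacle is the equivalence ``$\pt\in\AC\iff\Phi$ reversible by a \emph{physical} (completely positive, trace-preserving) map'' rather than merely injective as a linear map: injectivity of $\Phi$ is elementary, but upgrading a linear left inverse to an admissible quantum recovery is exactly what requires the Knill--Laflamme / correctability machinery established for these stabilizer-induced channels in \cite{PhysRevA.81.032326}. I would therefore lean on that prior result both to close the forward implication (full algebra correctable $\Rightarrow$ perfect channel) and to guarantee that reversibility of a subset in $\AC$ is genuine, so that the clean kernel argument above is legitimate.
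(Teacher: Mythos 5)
Your proposal is correct, but note that the paper itself offers no argument at all: its entire ``proof'' of Theorem~\ref{thm2} is a citation to Theorem~4(iii) of \cite{PhysRevA.81.032326}. What you do differently is assemble an explicit derivation from the two ingredients the paper quotes immediately before the theorem: (i) for the forward direction, the Pauli products form an orthonormal operator basis of $\LC(\HC_k)$ by \eqref{eqn5}, so $\GC(\pt)=\PC_k$ forces the algebra generated by $\GC(\pt)$ to be all of $\LC(\HC_k)$, and its full correctability on $\pt$ (the quoted result) yields a physical recovery map, hence $\pt\in\AC$; and (ii) for the converse, the elementary observation that a recovery $\RC$ with $\RC\circ\Phi=\mathrm{id}$ on density operators extends by linearity to all of $\LC(\HC_k)$, so no nonzero Pauli product $g$ can satisfy $\Tr_{\ptc}[VgV^\dagger]=0$, i.e. every Pauli product lies in $\GC(\pt)$. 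This buys transparency that the paper's citation does not: one sees that the converse is completely elementary, and that the only deep input --- which you correctly isolate in your closing paragraph --- is the correctability of the algebra generated by the information group, supplied by \cite{PhysRevA.81.032326} and the operator-algebra error-correction framework of \cite{PhysRevLett.98.100502,PhysRevA.76.042303,quantph.0907.4207}. One caveat: your forward direction requires reading ``fully correctable'' in that operator-algebra sense (existence of a completely positive trace-preserving recovery restoring all states), not merely the orthogonality-preservation gloss the paper gives in the text; the cited works do establish the stronger statement, so your use is legitimate, but a reader leaning only on the in-text gloss would find that step incompletely justified.
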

\begin{proof}
See Theorem~4 (iii) of \cite{PhysRevA.81.032326} for a rigorous proof.
\end{proof}
To determine whether $\GC(\pt)=\PC_k$ for some subset $\pt$ reduces to checking whether the partial trace down to $\ptc$ of the $2k$ encoded generators of $\PC_k$ is not zero. This is because if some encoded generator traces down to zero on $\pt$, then, by the group property, $\GC(S)$ must be a strict subset of $\PC_k$ (removing an independent generator makes the generated group strictly smaller). The question ``Is $\GC(\pt)=\PC_k$?" is a decision problem, and its yes/no answer can be provided via solving a system of linear equations over $\ZZ_D$,  see Appendix~C of \cite{PhysRevA.81.032326} for a detailed efficient algorithm (with polynomial running time in $n$ and $k$).

Note that one can also use the Choi-Jamio\l kowski isomorphism in determining if a subset $\pt$ belongs to the access structure as follows. Consider a maximally entangled state $\ket{\Psi^{+}}$ between the input of the isometry and some reference system $R$. Let
\begin{equation}\label{eqn20}
\Omega=(I_R\otimes V)\ket{\Psi^{+}}
\end{equation}
and let
\begin{equation}\label{eqn21}
\rho_{R\pt}=\Tr_{\ptc}\dya{\Omega}.
\end{equation}
Then $\pt\in\AC$ if and only if $\rho_{R\pt}$ is a pure maximally entangled state, since the latter implies that the channel from the input of $V$ to $\pt$ is perfect. 
However, our approach is more powerful since it characterizes the information present in intermediate subsets, as described in the next subsection.

\subsection{The structure of the subset information group and the information available in an intermediate subset}
Let $g_1,g_2,\ldots,g_m$ be a minimal generating set\footnote{That is, removing any generator results in generating a strictly smaller group.} of $\GC(\pt)$,
\begin{equation}\label{eqn22}
\GC(\pt)=\langle g_1,g_2,\ldots,g_m\rangle.
\end{equation}
Since $\GC(\pt)$ is a subgroup of the Pauli group of $k$ qudits each of prime dimension, it is Clifford equivalent to a simpler group $\GC_0(\pt)$, the \emph{canonical subset information group}, generated only by ``local" $X$ and $Z$ operators,
\begin{align}
\GC_0(\pt)&\equiv W\GC(\pt)W^\dagger\notag\\
&=\langle X_1, Z_1,\ldots, X_r,Z_r, Z_{r+1},\ldots, Z_{r+s}\rangle\label{eqn23}\\
&=\GC_0^{sym}(\pt)\bigcup \GC_0^{iso}(\pt), \text{ with } 2r+s\leqslant 2k\label{eqn24},
\end{align}
where $W$ is a Clifford operator\footnote{A Clifford operator in a unitary operator that maps Pauli operators to Pauli operators through conjugation, that is, leaves the Pauli group invariant under conjugation.} that depends on the subset $\pt$, but for simplicity of notation we remove this dependence. Here
\begin{equation}\label{eqn25}
\GC_0^{sym}(\pt)=\langle X_1,Z_1,\ldots, X_r,Z_r\rangle
\end{equation}
is the \emph{symplectic} subgroup of $\GC_0(\pt)$ and
\begin{equation}\label{eqn26}
\GC_0^{iso}(\pt)=\langle Z_{r+1},\ldots, Z_{r+s}\rangle
\end{equation}
is the \emph{isotropic} subgroup of $\GC_0(\pt)$.

The fact that $\GC(\pt)$ is isomorphic to $\GC_0(\pt)$ in \eqref{eqn23} is a direct consequence of a more general result regarding the structure of bilinear symplectic forms, see Theorem 1.1 of \cite{daSilva:LecNotesSymplecticGeometry}. Sec. IV.B of \cite{quantph.1105.5872} provides an explicit algorithm for constructing the Clifford operator $W$ as a product of elementary qudit Clifford gates; the qudit algorithm is just a straightforward generalization of the qubit one presented in Sec. 1 of \cite{quantph.0608027}. 

The form of $\GC_0(\pt)$ combined with the remarks of the previous subsection that the $C^{*}$-algebra generated by $\GC_0(\pt)$ is fully correctable allows us to say that the quantum channel from the input of the isometry to the subset $\pt$, obtained by partially-tracing over $\ptc$, is a perfect $r$-qudit channel (corresponds to the symplectic subgroup $\GC_0^{sym}(\pt)$) tensored with a perfect $s$ dit classical channel (corresponds to the isotropic subgroup $\GC_0^{iso}(\pt)$ and its quantum capacity is zero, since its correctable algebra is commutative). In other words, the channel can perfectly transmit $r$ qudits plus extra $s$ classical dits, which is equivalent, in the context of secret sharing, to the fact that the players in $\pt$ can fully recover $r$ qudits of the secret together with at most $s$ ``extra" classical dits of information by performing an appropriate decoding procedure.

\section{Concealing the intermediate structure via twirling}\label{sct5}
Consider now a generalized secret sharing scheme in which there are no intermediate structures, i.e. $\IC=\{\emptyset\}$. In this case it can be shown \cite{PhysRevA.61.042311} that the dimension of the quantum secret cannot exceed the dimension of each individual player's quantum system, or, equivalently, that $[[n,k,\delta]]_D$ codes induce generalized secret sharing schemes that must have $\IC\neq\{\emptyset\}$ whenever $k>1$. The argument is based on the fact that there exist forbidden subsets that can be made authorized by the addition of only one additional player, hence complete information about the secret is transferred via a single player's quantum system, from which the bound follows. Therefore, threshold quantum secret sharing schemes are extremely inefficient in terms of the required quantum communication. For example, if the dealer wants to share a $1,000$ qubit secret to $1,000$ players, then each player has to posses at least a $1,000$ qubit quantum system, for a total of $1,000\times 1,000=1,000,000$ carrier qudits! 

However, ramp schemes do not have this strong limitation: security is traded for efficiency, so that players belonging to the intermediate structure can extract some partial information about the secret, with the benefit that the encoded quantum secret can have larger dimension than the players' individual share size. 

Are there ways to improve the security of such intermediate secret sharing schemes, for example, by reducing the amount of information the intermediate structure can extract about the quantum secret? 
As recently shown in \cite{quantph.1108.5541} such methods exist and are based on combining the ramp quantum secret sharing scheme with a classical  secret sharing scheme. In the simplest scenarion, the dealer prepares a $k$-qudit secret, then for every input qudit $i$ chooses with equal probability $1/D$ two integers $m_i,n_i\in\ZZ_D$, then applies the operator $X_i^{m_i}Z_i^{n_i}$; the dealer effectively encrypts the quantum secret using a $2k$ classical key, then distributes the ``scrambled" secret to the players using the stabilizer encoding. From the players point of view, who do not know the integers $m_i$ and $n_i$, this is equivalent to the application by the dealer of a completely depolarizing, or ``twirling", channel to each input qudit. After this the dealer distributes the $2k$ dit classical key to the $n$ players using a classical threshold secret sharing scheme $(q,n)$, with $q$ appropriately chosen, so that any subset of $q$ or more players can recover the classical key, which allows them to ``undo" the effect of the depolarizing channel and recover the whole quantum secret, whereas any subset of less than $q$ players has no information whatsoever about the classical key and their shared quantum state is independent of the secret. In this way, a generalized secret sharing scheme induced by an $[[n,k,\delta]]_D$ stabilizer code is transformed to a threshold secret sharing scheme $(q,n)$. This method is very similar to teleportation, in which Bob cannot recover Alice's state without knowing the results of Alice's measurements (that play the role of the twirling channel).

This motivates the following question: given the intermediate structure $\IC$, what is the most efficient way of ``erasing" the information from it, i.e. what is the minimum length of the classical random encryption key the dealer must use and how can this be done systematically, for arbitrary generalized secret sharing schemes (induced by stabilizer quantum error correcting codes)? We present below such a systematic method and show that the dealer can in general use classical encryption keys of smaller  length $l$, with $k\leqslant l\leqslant 2k$, and that the length of the encryption key depends solely on the underlying stabilizer code. We will further show that our scheme is optimal in the length of the encryption key, that is, one cannot use shorter keys. This minimizes the amount of classical communication between the dealer and the players.

\subsection{Hiding information from a subset}
The whole idea behind our scheme is to employ the structure of the subset information group $\GC_0(S)$. Consider a generator of $\GC_0(\pt)$. Without loss of generality, we choose $X_1$. Then, as mentioned before, the players in $\pt$ can recover the $X_1$-type of information about the secret by an appropriate POVM. Suppose now that before the encoding the dealer applies with probability $p_j=1/D$ the operator $Z_1^{j}$, where $0\leq j<D$. One can easily check that
\begin{equation}\label{eqn27}
\frac{1}{D}\sum_{j=0}^{D-1}Z_1^j X_1 {Z_1^{j}}^\dagger=\frac{1}{D} \sum_{j=0}^{D-1}\omega^j X_1=0.
\end{equation}
Then, the $X_1$-type of information is hidden from the subset $\pt$, since, effectively, the players in $\pt$ do not know which $j$ the dealer chose, and, by \eqref{eqn27}, the terms of the form $\Tr_{\ptc}\left[\tilde VX_1\tilde V^\dagger\right]$ in the expansion \eqref{eqn16} of $\rho_{\pt}$ become now
\begin{equation}\label{eqn28}
\Tr_{\ptc}\left[\tilde V\left(\frac{1}{D}\sum_{j=0}^{D-1}Z_1^j X_1 {Z_1^{j}}^\dagger
\right)\tilde V^\dagger\right]=0,
\end{equation}
where $\tilde V = VW$ (we remind the reader that $\GC_0(\pt)$ was obtained from $\GC(\pt)$ via a conjugation by the Clifford operator $W$ and this is why $V$ is replaced by $\tilde V$ in \eqref{eqn28}).
Also, any other operator in the information group that contains $X_1$ gets mapped to zero (a direct consequence of \eqref{eqn27}), so at the end all non-zero terms in the expansion of $\rho_{\pt}$ that contain the $X_1$ generator become zero.

It is now clear how the dealer can hide all information about the secret from the players in $\pt$: twirls each generator of the canonical subset information group by a corresponding non-commuting operator (either $X$ or $Z$). Since any operator in the information group is a product of the generators, it follows at once that it will get mapped to zero by the twirling procedure, with the exception of the identity. Therefore, if before twirling the quantum secret was represented in $\pt$ by a state of the form \eqref{eqn16}, after the twirling procedure the state down to $\pt$ has the form \eqref{eqn18}, i.e. the players in $\pt$ lack any information whatsoever about the secret.

Note that if the players know which operators the dealer applied to perform the twirling, they can recover the same information about the secret as before the twirling, since the  twirling unitary on the dealer's space is effectively just a change of basis now known by the players.

\subsection{Perfect semi-quantum secret sharing schemes}
Consider now the collection of intermediate subsets $\IC$. To hide the information from $\IC$ we can define the ``intermediate information group" or the ``ramp information group" as a union of all intermediate subset information groups
\begin{equation}\label{eqn29}
\GC(\IC):=\bigcup_{\pt\in \IC}\GC(\pt),
\end{equation}
which again is a subgroup of the Pauli group $\PC_k$. Next employ the same arguments as in the previous subsection, but now with $\GC(\IC)$ replacing $\GC(\pt)$ (and $\GC_0(\IC)$ denoting the canonical intermediate information group obtained from $\GC(\pt)$ through an appropriate Clifford conjugation). The dealer distributes the randomly generated classical key to the players using a perfect classical secret sharing scheme $(\AC',\FC',\IC')$, having $\AC'=\AC$, $\FC'=\FC\bigcup\IC$ and $\IC'=\{\emptyset\}$, so that the players in $\IC$ have no information whatsoever about the key but players in $\AC$ can fully recover the key. In this way, the information is concealed from $\IC$. Note that classical secret sharing schemes with arbitrary access structure exist \cite{BenalohLeichter} as long as the access structure is monotone --i.e., if a set $S$ can recover the secret, so can all sets containing $S$. In conclusion, using our scheme one can transform an arbitrary $(\AC,\FC,\IC)$ ramp scheme into a perfect scheme $(\AC,\FC\bigcup\IC,\{\emptyset\})$ with the same access structure but without any intermediate subsets! See Fig.~\ref{fgr2} for a graphical description of our protocol.

In particular, if we define $q$ to be the minimum integer so that all subsets of players of size $q$ or more belong to $\AC$ (and therefore the subsets in $\IC$ must be of smaller size), then we can employ a classical $(q,n)$ threshold secret sharing scheme to distribute the classical key to the $n$ players so that all subsets in $\AC$ of size $q$ or more can fully recover the key, hence the full quantum secret, whereas the players in $\IC$ have no information whatsoever about the classical key and cannot recover anything about the original quantum secret. In this way, a generalized $(\AC,\FC,\IC)$ quantum secret sharing scheme induced by an $[[n,k,\delta]]_D$ stabilizer quantum error correcting code can be turned into a threshold $(q=n-\delta+1,n)$ semi-quantum secret sharing scheme by allowing the sharing of $2r+s\leqslant 2k$ classical bits between the dealer and the players. 
\begin{figure}
	\includegraphics[scale=0.63]{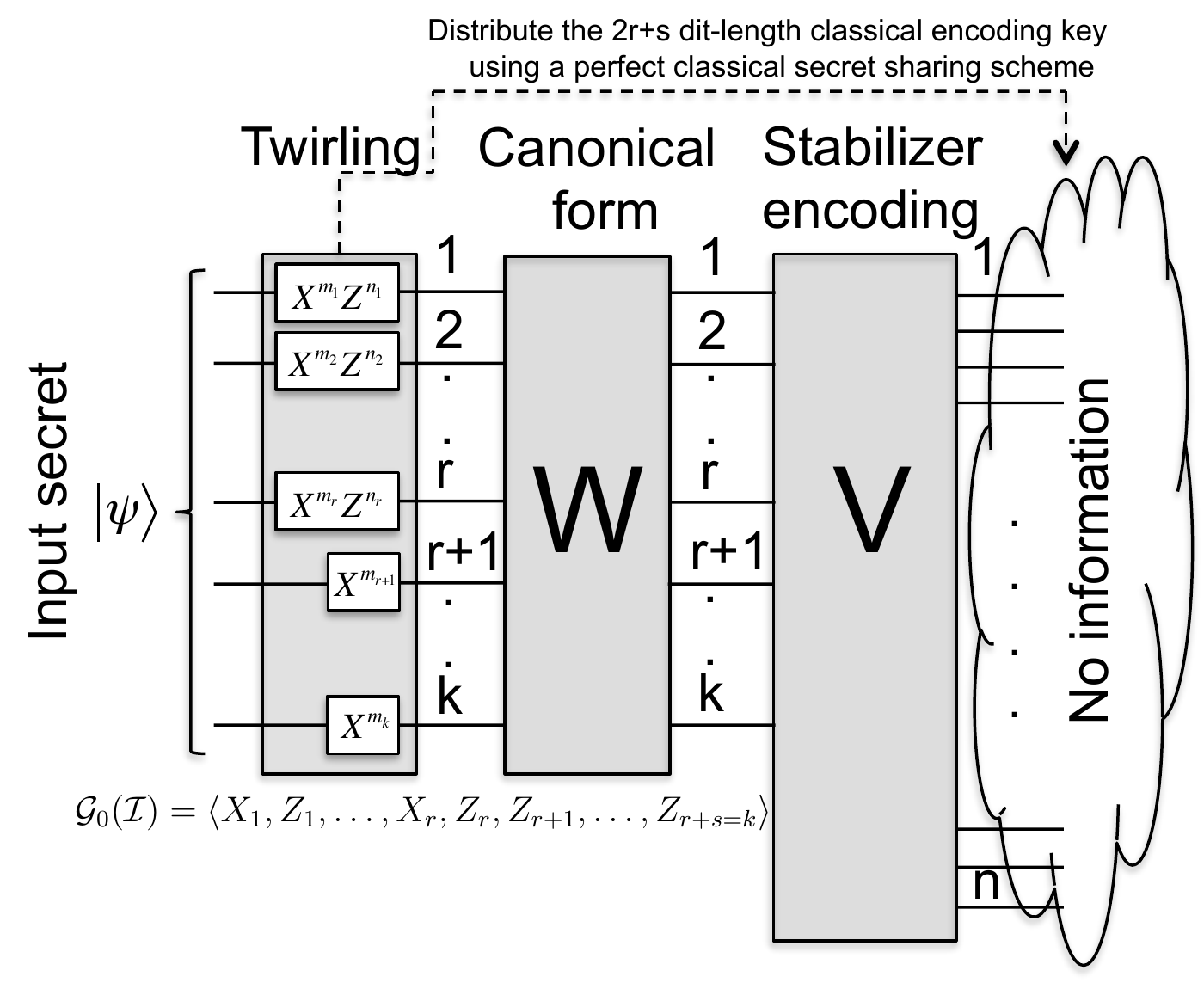}
	\caption{Turning a ramp quantum secret sharing scheme into a perfect semi-quantum secret sharing scheme. Note that $r+s=k$, as shown by Lemma~\ref{lma1}.}
	\label{fgr2}
\end{figure}

The twirling followed by the encoding of the classical key into a perfect classical secret sharing scheme can be seen as effectively ``cutting" the ramp structure and transforming it to an un-authorized structure. 
Mathematically, the intermediate information group $\GC_0(\IC)$ is being ``twirled" to the identity operator, i.e.
\begin{equation}\label{eqn30}
\sum_{U} UgU^\dagger=0,\quad \forall g\in\GC_0(\IC), g\neq I,
\end{equation}
where the sum is taken over all unitary operators $U$ of the form $X_i^{m_i}Z_i^{n_i}Z_{r+j}^{m_{r+j}}$, $1\leqslant i\leqslant r$, $1\leqslant j\leqslant s$ and $m_i,n_i,m_{r+j}$ run over all possible integers in $\ZZ_D$, so the number of terms in \eqref{eqn30} is 
\begin{equation}\label{eqn31}
D^r\times D^r\times D^s=D^{2r+s}\leqslant D^{2k}.
\end{equation}
Compactly we write
\begin{equation}\label{eqn32}
\sum_{U} U\GC_0(\IC) U^\dagger=I.
\end{equation}
We call the collection of all $D^{2r}D^{s}$ such unitary operators the \emph{twirling group} (it is easy to see that the collection of such operators form a group), and denote it by $\TC_0(\IC)$. Note that the structure of the twirling group is easy to read from the structure of $\GC_0(\IC)$, namely, if $\GC_0(\IC)=\langle X_1, Z_1,\ldots, X_r,Z_r, Z_{r+1},\ldots, Z_{r+s}\rangle$, then the twirling group $\TC_0(\IC)$ is generated by
\begin{equation}\label{eqn33}
\TC_0(\IC)=\langle X_1, Z_1,\ldots, X_r,Z_r, X_{r+1},\ldots, X_{r+s}\rangle.
\end{equation}



We can show that $r+s=k$, which implies that the length $l=2r+s$ of the classical encryption key is in general smaller than $2k$, but is bounded below by $k$, the lower bound being achieved whenever $r=0$. This is the case whenever the information group $\GC(\IC)$ is Abelian, which means that before the twirling the intermediate subsets were able to extract only classical information about the secret. Our result is summarized by the following Lemma.
\begin{lemma}\label{lma1}
Let $\GC(\IC)$ be the intermediate information group obtained from an $[[n,k,\delta]]_D$ stabilizer quantum error correcting code. Let \mbox{$\GC_0(\IC)=\langle X_1,Z_1,\ldots,X_r,Z_r,Z_{r+1},\ldots,Z_{r+s}\rangle$} be the canonical intermediate information group isomorphic to $\GC(\IC)$. Then
\begin{equation}\label{eqn34}
r+s=k.
\end{equation} 
\end{lemma}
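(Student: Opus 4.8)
The plan is to move the statement into the symplectic geometry of $\ZZ_D^{2k}$, show that the subspace attached to $\GC(\IC)$ is \emph{coisotropic}, and read off $r+s=k$ from a dimension count. Identify $\QC_k$ with $\ZZ_D^{2k}$ (as noted below \eqref{eqn5}), so that the commutation phase $\lambda_{12}$ of \eqref{eqn4} becomes a non-degenerate symplectic form $\langle\cdot,\cdot\rangle$, with two Pauli products commuting iff their vectors are symplectically orthogonal. A subgroup $\GC$ becomes a subspace whose symplectic complement $\GC^{\perp}:=\{v:\langle v,w\rangle=0\ \forall w\in\GC\}$ obeys $\dim\GC+\dim\GC^{\perp}=2k$. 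The canonical form \eqref{eqn23}--\eqref{eqn26} says exactly that $r$ is the symplectic rank while $s=\dim(\GC\cap\GC^{\perp})$ is the dimension of the radical, so $\dim\GC=2r+s$. Hence the target $r+s=k$ is equivalent to $\dim(\GC_0(\IC)\cap\GC_0(\IC)^{\perp})=2k-\dim\GC_0(\IC)$, i.e.\ to $\GC(\IC)$ being coisotropic, $\GC(\IC)^{\perp}\subseteq\GC(\IC)$.

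Next I would isolate the two structural inputs that force coisotropy. First, the intermediate structure is self-complementary, $S\in\IC\Leftrightarrow\bar S\in\IC$: this is immediate from Theorem~\ref{thm1}, since $S\in\IC$ means $S\notin\AC$ and $S\notin\FC$, and if $\bar S$ lay in $\AC$ (resp.\ $\FC$) then \eqref{eqn10}--\eqref{eqn11} would place $S$ in $\FC$ (resp.\ $\AC$). Second, complementary subsets carry symplectically dual information groups, $\GC(\bar S)=\GC(S)^{\perp}$. The inclusion $\GC(\bar S)\subseteq\GC(S)^{\perp}$ is the easy half: if $X^{\vect{x}}Z^{\vect{z}}\in\GC(\bar S)$ its encoded image admits a representative supported on $\bar S$, whereas any element of $\GC(S)$ admits one supported on $S$; disjoint supports commute, so the two input operators are symplectically orthogonal. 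The reverse inclusion is the genuine content: it is the same no-cloning / cleaning duality invoked in the proof of Theorem~\ref{thm1}, now in sharpened form, and is equivalent to the dimension identity $\dim\GC(S)+\dim\GC(\bar S)=2k$ established for stabilizer codes via the complementary-channel structure of $V$ in \cite{PhysRevA.81.032326}.

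Assembling these, write $\GC(\IC)=\bigvee_{S\in\IC}\GC(S)$ for the subspace spanned by the intermediate groups. Standard symplectic duality together with the two inputs yields
\begin{equation}
\GC(\IC)^{\perp}=\bigcap_{S\in\IC}\GC(S)^{\perp}=\bigcap_{S\in\IC}\GC(\bar S)=\bigcap_{S\in\IC}\GC(S),
\end{equation}
where the middle step uses $\GC(\bar S)=\GC(S)^{\perp}$ and the last uses self-complementarity to re-index the intersection over $\IC$. An intersection sits inside each of its members, and every $\GC(S)\subseteq\GC(\IC)$, so $\GC(\IC)^{\perp}\subseteq\GC(\IC)$: the group is coisotropic. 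Its radical therefore coincides with $\GC(\IC)^{\perp}$, giving $s=\dim\GC(\IC)^{\perp}=2k-\dim\GC(\IC)=2k-(2r+s)$, which rearranges to $r+s=k$.

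The hard part is the second input, the equality $\GC(\bar S)=\GC(S)^{\perp}$ (equivalently $\dim\GC(S)+\dim\GC(\bar S)=2k$): the easy inclusion is a one-line support argument, but the reverse inclusion is precisely the cleaning lemma for stabilizer codes and is where the real work lies. Once it is in hand, everything above is linear algebra over $\ZZ_D$; one also checks the extremal consistency that $S\in\AC$ forces $\GC(S)=\PC_k$ and $\GC(\bar S)=\{I\}=\PC_k^{\perp}$, so the duality already holds on the boundary of $\IC$ and the argument only needs its extension across the intermediate subsets.
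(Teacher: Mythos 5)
Your proposal is correct, and it reaches \eqref{eqn34} by a genuinely different route than the paper. The paper's own proof is qualitative and by contradiction: assuming $r+s<k$, it splits the input into $I_1$ (the first $r+s$ qudits) and $I_2$ (the remaining $k-(r+s)$), proves --- by exactly the case analysis you give via Theorem~\ref{thm1} --- that the complement of any intermediate subset is again intermediate, and then applies the ``Somewhere'' theorem of \cite{PhysRevA.76.062320}: since $\GC_0(\IC)$ acts trivially on $I_2$, no intermediate subset carries any $I_2$-type information, so all of it must sit in $\bar R$ for an intermediate $R$, contradicting $\bar R\in\IC$. You share the self-complementarity step verbatim but replace the information-theoretic endgame with symplectic linear algebra: coisotropy of $\GC(\IC)$, obtained from the exact duality $\GC(\ptc)=\GC(\pt)^{\perp}$, plus the dimension count $s=\dim\GC(\IC)^{\perp}=2k-(2r+s)$. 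What your route buys: it is quantitative rather than merely contradictory (it identifies $\GC(\IC)^{\perp}=\bigcap_{S\in\IC}\GC(S)$ as the radical), and it makes visible why the conclusion can fail for a sub-collection $\JC\subset\IC$ that is not closed under complementation, which is directly relevant to the generalization sketched in Sec.~\ref{sct7}. What it costs: the paper needs only the one-sided, qualitative Somewhere/No-Splitting theorems, while you invoke the strictly stronger cleaning duality; as you correctly flag, your support argument gives only $\GC(\ptc)\subseteq\GC(\pt)^{\perp}$, and the reverse inclusion is equivalent to $\dim\GC(\pt)+\dim\GC(\ptc)=2k$. That identity does hold for prime-$D$ stabilizer codes --- writing $\GC(\pt)$ as the image in $\PC_k$ of the centralizer of $\SC$ intersected with the Pauli products supported on $\pt$, it follows from a few lines of symplectic counting over $\ZZ_D$, using the fact that the labels supported on $\pt$ form the symplectic complement of those supported on $\ptc$ --- so your reduction is sound, but you should either carry out that count or pin down the precise statement in \cite{PhysRevA.81.032326}, since the lemma's full weight rests on it, whereas the paper deliberately avoids needing it. Two minor points to tighten: \eqref{eqn29} literally defines $\GC(\IC)$ as a union, and your silent reading of it as the generated subgroup is the correct one (a union of subgroups need not be a group) and is exactly what makes $\bigl(\sum_{S}\GC(S)\bigr)^{\perp}=\bigcap_{S}\GC(S)^{\perp}$ available; and both your argument and the paper's implicitly require $\IC$ to be nonempty (your intersection over an empty family would be all of $\ZZ_D^{2k}$, and the paper's choice of $R$ would be impossible), an assumption implicit in the lemma's statement since for perfect schemes $r=s=0$.
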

\begin{proof}
We prove the Lemma by contradiction. Assume \mbox{$r+s<k$}. Split the input qudits into 2 subsets, $I_1$ and $I_2$, with $I_1$ consisting of the first $r+s$ qudits and with $I_2$ consisting of the last $k-(r+s)$ qudits. Choose $R$ to be some subset that belongs to the ramp structure.

We first show that the complement $\bar R$ of $R$ must also belong to the ramp structure. 
There are 3 possible cases for $\bar R$ to belong to : i) the un-authorized structure; ii) the access structure;  iii) the ramp structure;. Case i) must be excluded, since it implies (see the ``Somewhere Theorem" of \cite{PhysRevA.76.062320}) that the complement $R$ of $\bar R$ must belong to the access structure, which contradicts the hypothesis that $R$ belongs to the ramp structure. Also case ii) must be excluded, since it implies (see the ``No Splitting Theorem" of \cite{PhysRevA.76.062320})  that the complement $R$ of $\bar R$ must belong to the un-authorized structure, which again contradicts the hypothesis that $R$ belongs to the ramp structure. We therefore conclude that case iii) is the only possible one, i.e. $\bar R$ must belong to the ramp structure.

Let us return now to the collection $I_2$. The structure of $\GC_0(\IC)$ implies that no information about $I_2$ is present in any subset that belongs to the ramp structure. Denote by $R$ such a subset. Then, by the ``Somewhere Theorem" of \cite{PhysRevA.76.062320}, all information about $I_2$ must be present in the complement $\bar R$ of $R$. But we just proved above that the complement of $R$ must belong to the ramp structure, hence $\bar R$ can extract partial information about $I_2$, a contradiction. Hence the initial hypothesis $r+s<k$ must be false. But $r+s$ cannot be greater than $k$, therefore $r+s=k$.
\end{proof}

\section{Examples}\label{sct6}
\subsection{The CNOT scheme}
The simplest example that illustrates our point is a quantum $[[2,1,1]]_2$ code arising from a CNOT-type isometry, illustrated Fig.~\ref{fgr3}.
\begin{figure}
	\includegraphics[scale=0.7]{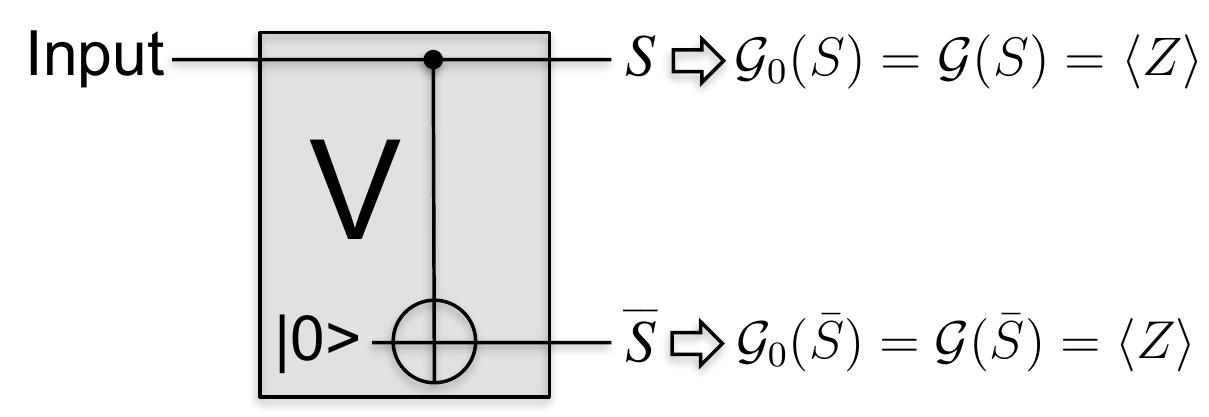}
	\caption{A CNOT-based encoding}
	\label{fgr3}
\end{figure}
Although the code has distance 1 and is not really useful for quantum error correction, it illustrates the basic principles of our work in a very simple and intuitive manner.
The codewords are
\begin{align}\label{eqn35}
\ket{c_0}&=\mathrm{CNOT}(\ket{0}\otimes\ket{0})=\ket{00}\notag\\
\ket{c_1}&=\mathrm{CNOT}(\ket{1}\otimes\ket{0})=\ket{11}.
\end{align}
It is clear that if both $\pt$ and $\ptc$ come together they can reconstruct any secret that was encoded by the dealer (they simply ``undo" the effect of CNOT by applying it again). However, any individual player (consider just $\pt$, since by symmetry the situation is similar for $\ptc$)  has some partial information about the secret. The subset information group in this case is generated by 
\begin{equation}\label{eqn36}
\GC_0(\pt)=\GC_0(\ptc)=\GC(\pt)=\GC(\ptc)=\langle Z \rangle,
\end{equation}
so the player $\pt$ (or $\ptc$) can only extract $Z$-information about the secret. This is easy to verify, since the eigenvectors of the $Z$ operator, $\ket{0}$ and $\ket{1}$, are encoded into $\ket{00}$ and $\ket{11}$, respectively, and the individual players can distinguish with certainty whether $\ket{0}$ or $\ket{1}$ was fed in at the input. On the other hand, the $X$-type of information is totally absent from both $\pt$ and $\ptc$, and this can easily be seen by noting that $\ket{+}=(\ket{0}+\ket{1})/\sqrt{2}$ and $\ket{-}=(\ket{0}-\ket{1})/\sqrt{2}$ are encoded into 2 Bell states which are locally indistinguishable. The $Y$-type of information is also locally absent, by the same argument. Therefore we have
\begin{align}\label{eqn37}
\AC&=\left\{\{1,2\}\right\},\\
\FC&=\{\emptyset\},\notag\\
\IC&=\{\{1\},\{2\}\}\notag.
\end{align}

To hide the partial information from the ramp structure, the dealer randomly generates an integer $m$ and then applies 
the operator $X^m$ before encoding (note that it is essential that $X$ does not commute with the generator $Z$ of the information group ); the twirling group is generated by $\langle X\rangle$.  The dealer then distributes the bit $m$ to the 2 players using a classical $(2,2)$ threshold secret sharing scheme, see Fig.~\ref{fgr4} for a graphical description. 

\begin{figure}
	\includegraphics[scale=0.7]{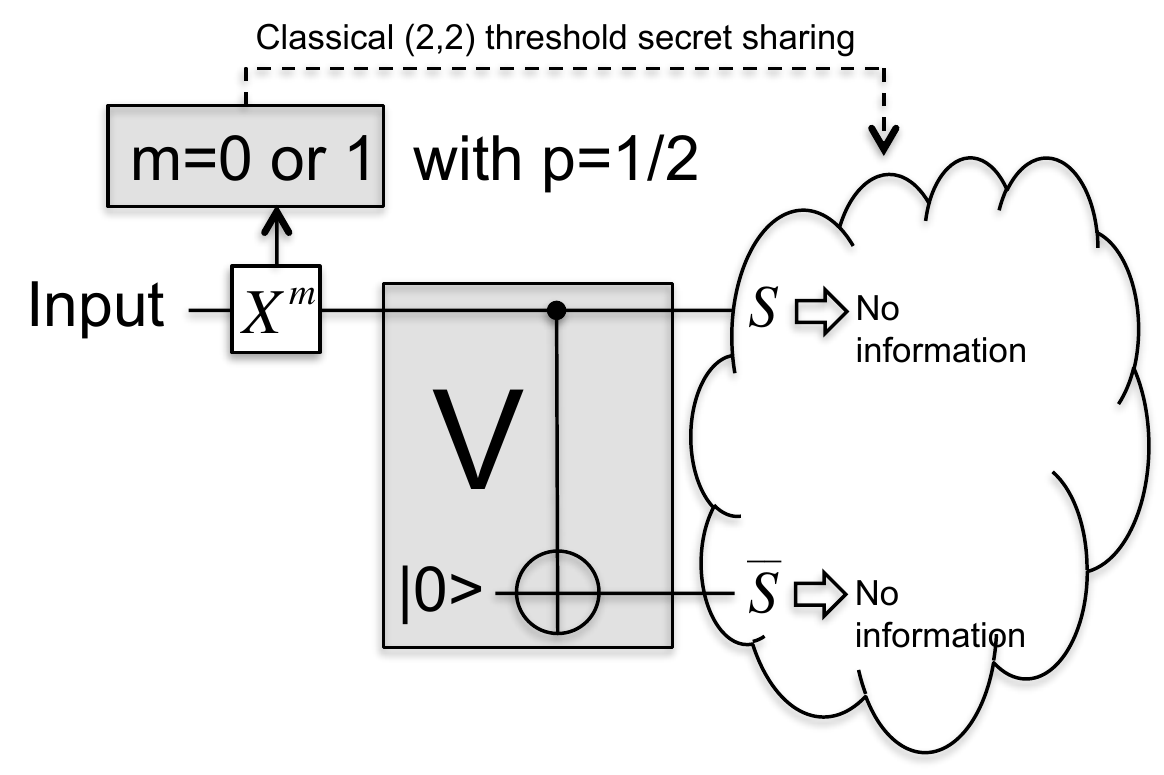}
	\caption{``Cutting" the ramp structure by twirling.}
	\label{fgr4}
\end{figure}

No individual player can recover $m$ (the classical key), hence cannot recover any information about the secret. If the two players collaborate they can then recover $m$, undo the effect of the twirling, then fully recover the quantum secret. Therefore this code is turned into an optimal $(2,2)$ semi-quantum threshold secret sharing scheme.  Note that the number of classical bits required is half compared to the scheme in which the input is fully depolarized by a twirling group $\langle X, Z\rangle$. The length of the classical key is actually achieving the lower bound $l=k=1$.

This example is extremely simple but illustrates our main concepts, and the interested reader can easily work out the details.

\subsection{The $n$-partite GHZ scheme}
We consider now a generalization of the CNOT encoding. The underlying structure is a stabilizer code $[[n,1,1]]_2$ with 2 codewords,
\begin{align}\label{eqn38}
\ket{c_0}&=\ket{00\cdots 0}\notag\\
\ket{c_1}&=\ket{11\cdots 1},
\end{align}
and the encoding circuit can be realized as a ``cascade" of CNOT gates, see Fig.~\ref{fgr5}. 
\begin{figure}
	\includegraphics[scale=0.7]{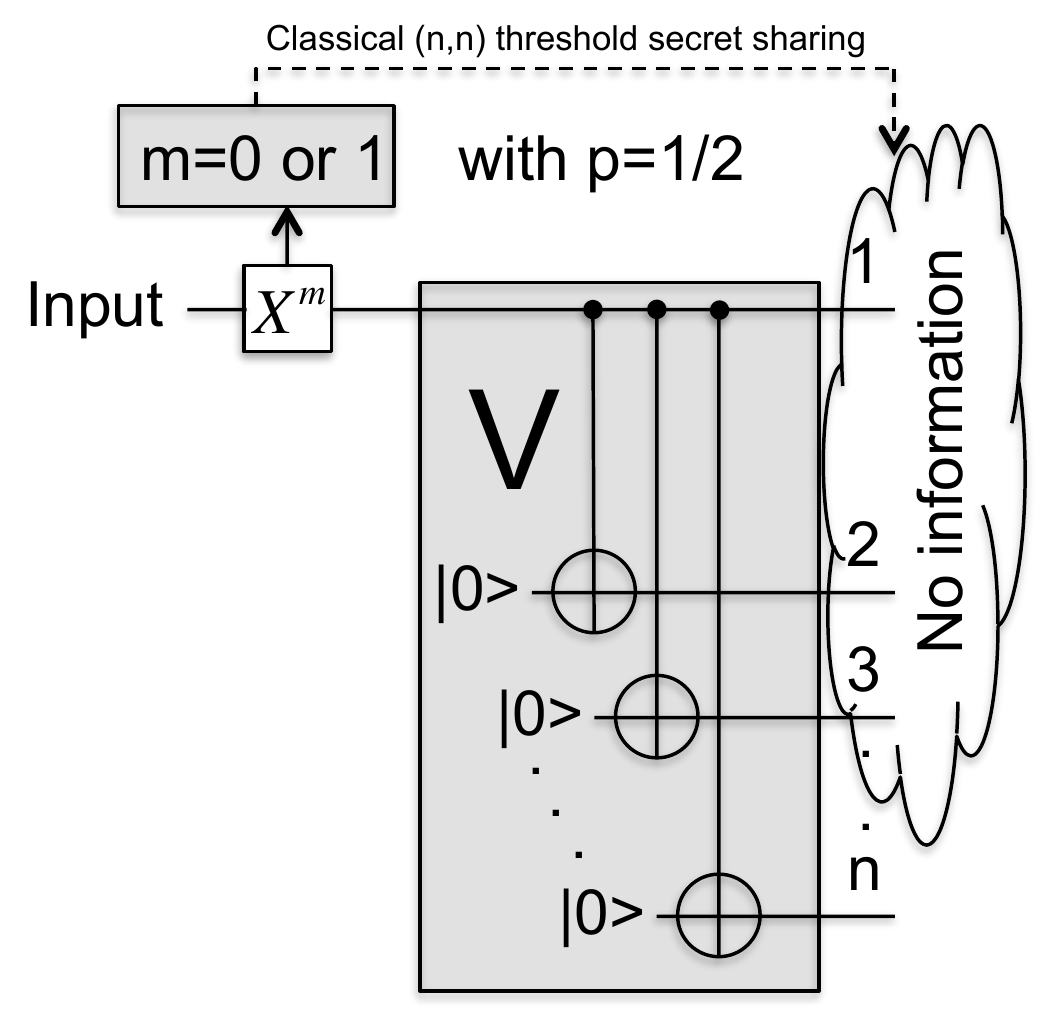}
	\caption{GHZ encoding. ``Cutting" the ramp structure by twirling.}
	\label{fgr5}
\end{figure}
This code has still distance 1, but nevertheless can correct for bit-flip errors using a majority-voting based decoding. The entire collection of $n$ players can fully reconstruct the secret, whereas any subset of less than $n$ players belongs to the ramp structure and can only recover $Z$-information about the secret. More technically,
\begin{equation}\label{eqn39}
\GC_0(\pt)=\GC(\pt)=\langle Z \rangle,\quad\forall \pt \text{ with } |\pt|<n,
\end{equation}
hence
\begin{align}\label{eqn40}
\AC&=\left\{\{1,\ldots,n\}\right\},\\
\FC&=\{\emptyset\},\notag\\
\IC&=\left\{ S\in\NC : |S|<n \right\}\notag.
\end{align} 
The ramp structure can be ``cut" by a twirling with $X^m$ on the input  (so the twirling group is generated again by $\langle X\rangle$), followed by the encoding of the bit $m$ into an $(n,n)$ classical threshold secret sharing scheme. Therefore we end up with an $(n,n)$ semi-quantum threshold secret sharing scheme , with a classical encoding key of length $l=k=1$ (the classical communication required is minimal and equal to the number $k$ of input qubits).

\section{Conclusion and open questions}\label{sct7}
We showed that an $[[n,k,\delta]]_D$ qudit stabilizer code induces a generalized secret sharing scheme consisting of 3 structures: an access structure $\AC$ of which subsets of players can fully recover the secret, a forbidden structure $\FC$ of which subsets cannot recover any information about the secret, and an intermediate structure $\IC$ of which subsets can only recover partial information about the secret. Using an approach based on the theory of qudit stabilizer codes we provided a systematic way of determining the collections $\AC$, $\FC$ and $\IC$. We proved that the information available to a subset (or a collection of subsets) of players can be fully described by an information group, a subgroup of the Pauli group of $k$ qudits, and this quantifies the amount of accessible information in the subset. The structure of the information group provides a natural way of ``scrambling" (or ``twirling") the quantum secret: the dealer applies a unitary operator randomly chosen from the twirling group, completely determined by the information group. The twirling group is generated by $k\leqslant 2r+s\leqslant 2k$ generators, hence the twirling operators are indexed by $2r+s$ integers in $\ZZ_D$, for a total number of $D^{2r+s}$. The dealer can conceal the information from the ramp structure by distributing the $2r+s$-length key (that specifies which twirling operator he applied) to the set of $n$ players using an appropriate perfect classical secret sharing scheme. In particular, we showed that any $[[n,k,\delta]]_D$ stabilizer code induces a semi-quantum $(n-\delta+1,n)$ threshold secret sharing scheme. Our scheme is optimal in terms of the length of the classical encoding key the dealer has to distribute to the players, in contrast to the obvious scheme of $2k$-length key, in which the twirling group is the full Pauli group of $k$ qudits. Our method allows in general for better perfect classical secret sharing scheme encodings of the classical key, and therefore may drastically reduce the total amount of classical communication.

Our scheme is extremely flexible and allows for the construction of more general secret sharing schemes, not just perfect ones. For example, suppose we are interested in hiding partial information only from a collection $\JC\subset\IC$ of subsets of players, not necessarily the entire ramp structure. Then it is enough to find the group $\GC(\IC)$, which is constructed as the union of all subset information groups that correspond to the subsets in the collection, then apply the same algorithm as  before, but now to $\GC(\JC)$ instead of $\GC(\IC)$. We can therefore ``cut" the information about the secret from any collection of subset of players we are interested in. 

Our formalism can also be applied in entanglement sharing schemes, in which the goal of the dealer is to distribute entanglement to subsets of players in such a way that any given subset is either fully entangled with the dealer or otherwise \texttt{¥}heir joint state is separable across the dealer/players cut. This is equivalent to the fact that for any subset, the corresponding information group must be either Abelian or the entire Pauli group $\PC_k$. If this is not the case, we can again employ the notion of twirling and classical secret sharing to transform the intermediate subset group to an Abelian one.

A central issue we did not address in the current article is the recovery operation. In principle, since an authorized set has full information about the secret, a recovery channel always exists, but its construction may not be obvious. In this article we adopt the common strategy used in the search for good quantum error correcting codes \cite{PhysRevA.78.042303}, in which one is not interested in the decoding but only in the parameters of the code. It would be nice to find a clean and systematic way of explicitly constructing this recovery operation. 

In the present article we made heavy use of the stabilizer structure of the encoding isometry. It would be interesting to move beyond stabilizer encodings, or use the formalism of approximate access structures \cite{PhysRevA.78.032330}, i.e. a subset is authorized if it can recover the secret with some bounded error. For the latter problem one should definitely use more general encoding isometries, since the stabilizer ones induce quantum channels with integer capacities, and this is the subject of future work. 

Finally it is interesting to note that the structure of the intermediate information group $\GC(\IC)$ is similar to that of non-Abelian groups used in entanglement-assisted quantum error correcting codes \cite{ToddBrun10202006}, and investigating the relations between the former and the latter may prove fruitful.

\begin{acknowledgments} 
I thank Gilad Gour and Barry Sanders for useful discussions and Robert Spekkens for suggesting the extension of the current scheme to approximate quantum secret sharing. I also thank Lvzhou Li for pointing out some typos in the manuscript. The research described here was supported by the Natural Sciences and Engineering Research Council (NSERC) of Canada and by a Pacific Institute for Mathematical Sciences (PIMS) Postdoctoral Fellowship. 
\end{acknowledgments}


%

\end{document}